\newcommand{\DontPrintSemicolon}{\dontprintsemicolon}\fi
\newcommand{\plll}{PotLLL\xspace}
\newcommand{\pllltwo}{PotLLL2\xspace}
\newcommand{\dlll}{DeepLLL\xspace}
\newcommand{\ntl}{NTL\xspace}
\newcommand{\tL}{\mathcal{L}}
\newcommand{\pot}{\mathrm{Pot}}
\newcommand{\argmin}{\mathrm{argmin}}
\newcommand{\R}{\mathbb{R}}
\newcommand{\Q}{\mathbb{Q}}
\newcommand{\Z}{\mathbb{Z}}
\newcommand{\spa}{\mathrm{span}\,}
\newcommand{\vol}{\mathrm{vol}\,}
\newcommand{\norm}[1]{\lVert #1 \rVert}
\newcommand{\grossO}[1]{\ensuremath{\mathcal{O}(#1)}}
\newtheorem{definition}{Definition}[section]
\newtheorem{proposition}[definition]{Proposition}
\newtheorem{lemma}[definition]{Lemma}
\newtheorem{corollary}[definition]{Corollary}
\begin{document}

\title{\plll: A Polynomial Time Version of LLL With Deep Insertions}

\date{\today}

\author{Felix Fontein\footnote{Universit\"at Z\"urich, \href{mailto:felix.fontein@math.uzh.ch}{felix.fontein@math.uzh.ch}} 
   \and Michael Schneider\footnote{Technische Universit\"at Darmstadt, \href{mailto:mischnei@cdc.informatik.tu-darmstadt.de}{mischnei@cdc.informatik.tu-darmstadt.de}} 
   \and Urs Wagner\footnote{Universit\"at Z\"urich, \href{mailto:urs.wagner@math.uzh.ch}{urs.wagner@math.uzh.ch}}}

\maketitle

\begin{abstract}
Lattice reduction algorithms have numerous applications in number theory, algebra, as well as in cryptanalysis.
The most famous algorithm for lattice reduction is the LLL algorithm. In polynomial time it computes a reduced basis with provable output quality.
One early improvement of the LLL algorithm was LLL with deep insertions (\dlll).
The output of this version of LLL has higher quality in practice but the running
time seems to explode. Weaker variants of \dlll, where the insertions are
restricted to blocks, behave nicely in practice concerning the running time.
However no proof of polynomial running time is known.
In this paper \plll, a new variant of \dlll with provably polynomial running time, is
presented. We compare the practical behavior of the new algorithm to classical
LLL, BKZ as well as blockwise variants of \dlll regarding both the output
quality and running time.

\vspace{3mm}
Keywords: Lattice Reduction, LLL Algorithm, Deep Insertion

Mathematics Subject Classification (2000): 68R05, 94A60, 68R05, 94A60
\end{abstract}

\section{Introduction}\label{sec:introduction}

The well-known LLL lattice reduction algorithm was presented in 1982 by Lenstra, Lenstra, Lov\'asz
\cite{LLL}.  Apart from various other applications (e.g.~\cite[Chapter 9,10]{ng10}) it has already
at an early stage been used to attack various public key cryptosystems.  Nevertheless lattice
problems remain popular when it comes to the construction of provably secure cryptosystems
(e.g.~\cite[Chapter~13]{ng10}).  Consequently improvements in lattice reduction still have a direct
impact on the security of many cryptosystems and rise high interest in the crypto-community.

Many lattice reduction algorithms used in practice are generalizations of the LLL algorithm.  The
Block-Korkine-Zolotarev (BKZ) reduction algorithm by Schnorr and Euchner~\cite{BKZ} is probably the
most used algorithm when stronger reduction than the one achieved by LLL is required. It can be seen
as a generalization of LLL to higher blocksizes, and while the running time seems to behave well for
small blocksizes~\cite{gam08}, no useful upper bound has been proven so far.  Another improvement of
the LLL algorithm has also been suggested in~\cite{BKZ}.  While in LLL adjacent basis vectors are
swapped if certain conditions are satisfied, in the so called LLL with deep insertions (\dlll in the
sequel), basis vectors can be swapped even when not adjacent. The practical behavior of~\dlll when
it comes to the reducedness of the output basis is superior the one of LLL. Unfortunately also the
running time explodes and does not seem to be polynomial in the dimension of the lattice. One
attempt to get across this problem is to restrict the insertions to certain blocks of basis
vectors. While the authors in \cite{BKZ} claim that these blockwise restriction variants of~\dlll
run in polynomial time, we are not aware of any proof thereof.  For an overview on the practical
behavior of the different variants and improvements on LLL, we refer to~\cite{ng06,gam08}.  There
the practical behavior of the reduction algorithms is investigated using the widely used~\ntl
library.

In this paper we present two new versions of~\dlll, called~\plll and \pllltwo. To our knowledge it
is the first improvement of LLL with regard to deep insertions which provably runs in polynomial
time. The practical behavior of~\plll and \pllltwo regarding both the output quality and running
time is empirically tested and compared to BKZ and~\dlll with different blocksizes.  The tests are
performed with a completely new implementation of the different reduction algorithms. This
additionally allows an independent review of the results in~\cite{ng06,gam08}. The tests indicate
that our algorithm can serve as a serious alternative to BKZ with low blocksizes.  This paper is an
extension of the work presented at WCC~2013 in Bergen~\cite{fo12}.

The paper is organized as follows. In Section~\ref{sec:prelim} all necessary notations and
definitions are given.  In Section~\ref{sec:algorithm} the reduction notion and the new algorithm is
presented and a theoretical analysis is provided.  Section~\ref{sec:experiments} contains the
empirical results and conclusions are drawn in Section~\ref{sec:conclusion}.

\section{Preliminaries}
\label{sec:prelim}

 A lattice $\tL \subset \R^m$ of rank $n$ and dimension $m$ is a discrete subgroup of $\R^m$
generated by integer
    linear combinations of $n$ linearly independent vectors $b_1,\dots,b_n$ in
$\R^m$:
    \[
    \tL= \tL(b_1,\dots,b_n) := \biggl\{\sum_{i=1}^n x_ib_i \biggm| \forall i :
x_i \in \Z \biggr\}\,.
    \]
We will often write the basis $b_1,\dots,b_n$ as rows of a matrix $B$ in the following way $B= [b_1,\dots,b_n]$.
In order to have exact representations in computers, only lattices in $\Q^n$ are
considered. Simple scaling by the least common multiple of the denominators
allows us to restrict ourselves to integer lattices $\tL \subseteq \Z^m$.
The volume of a lattice $\tL(B)$ equals the volume of its fundamental
parallelepiped $\vol(\tL)=\sqrt{\det(BB^t)}$.
For $n\geq 2$, a lattice has infinitely many bases as $\tL(B)=\tL(B')$ if and only if $\exists U \in
GL_n(\Z): B=UB'$. Therefore, the volume of a lattice is well defined.
By $\pi_k: \R^m \rightarrow \spa\{b_1, \dots, b_{k-1} \}^\bot$ we denote the
orthogonal projection from $\R^m$
  onto the orthogonal complement of $\spa\{b_1, \dots, b_{k-1} \}$.
 In
particular, $\pi_1 =
  \mathrm{id}_{\R^m}$ and $b^*_i:=\pi_i(b_i)$ equals the $i$-th basis vector of the Gram-Schmidt orthogonalization 
$B^*=[b^*_1,\dots, b^*_n]$ of $B$.
By $\mu_{i,j}:= \langle b_i,b^*_j \rangle / \langle b^*_j,b^*_j \rangle$,
$j<i$, we denote the Gram-Schmidt coefficients. The Gram-Schmidt vectors can
iteratively be computed by
 $\pi_i(b_i)=b_i^* = b_i - \sum_{j=1}^{i-1} \mu_{i,j} b_j^*$.

Throughout this paper, by $\norm{\cdot}$ we denote the Euclidean norm and by
$\lambda_1(\tL)$ we denote the length of a shortest non-zero vector in $\tL$
with respect to the Euclidean norm: $\lambda_1(\tL):=\min_{v\in
\tL\setminus\{0\}} \norm{v}$.
Determining $\lambda_1(\tL)$ is commonly known as the shortest vector problem
(SVP) and is proven to be NP-hard (under randomized reductions) (see e.g.
\cite{mi02}). 
Upper bounds with respect to the determinant exist, for all rank $n$ lattices $\tL$ we have  \cite{ng10}
\[
\frac{\lambda_1(\tL)^2}{\vol(\tL)^{2/n}}\leq \gamma_n \leq 1+\frac{n}{4}\,,
\]
where $\gamma_n$ is the \emph{Hermite constant} in dimension $n$.
Given a relatively short vector $v \in \tL$, one measures its quality by the
\emph{Hermite factor} $\norm{v}/\vol(\tL)^{1/n}$ it achieves.
Modern lattice reduction algorithms achieve a Hermite factor which is
exponential in $n$ and no polynomial time algorithm is known to achieve linear
or polynomial Hermite factors.

 Let $S_n$ denote the group of permutations of $n$ elements. By applying $\sigma \in S_n$ to a basis
 $B=[b_1,\dots,b_n]$, the basis vectors are reordered $\sigma
 B=[b_{\sigma(1)},\dots,b_{\sigma(n)}]$. For $1\leq k \leq \ell \leq n$ we define a class of
 elements $\sigma_{k,\ell} \in S_n$ as follows:
\begin{equation}
\sigma_{k,\ell}(i)=\left\{
\begin{array}{lll}
i & \mbox{for} &  i<k \mbox{ or } i>\ell\,, \\
\ell & \mbox{for} & i=k\,, \\
i-1 & \mbox{for} & k<i\leq \ell\,.
\end{array}
\right.
\end{equation}
Note that $\sigma_{k,\ell} = \sigma_{k,k+1} \sigma_{k+1,k+2} \cdots \sigma_{\ell-1,\ell}$ and that
$\sigma_{k,k+1}$ is swapping the two elements~$k$ and $k + 1$.

  \begin{definition}
    Let $\delta \in \left(1/4, 1\right]$.  A basis $B=[b_1,\dots,b_n]$
of a lattice
    $\tL(b_1,\dots,b_n)$ is called \emph{$\delta$-LLL reduced} if and only if it
satisfies the
    following two conditions:
    \begin{enumerate}
      \item $\forall 1\leq j<i\leq n: |\mu_{i,j}| \leq
\frac{1}{2}$ (size-reduced).
      \item  $1\leq k <n : \delta \cdot \|\pi_k(b_k)\|^2 \leq
\|\pi_k(b_{k+1})\|^2$ (Lov{\'a}sz-condition).
    \end{enumerate}
  \end{definition}
  A $\delta$-LLL reduced basis $B=[b_1,\dots,b_n]$ can be computed in polynomial time \cite{LLL} and  provably satisfies the following
bounds:
\begin{equation}
\|b_1\| \leq \bigl(\delta-1/4\bigr)^{-(n-1)/2} \cdot \lambda_1(\tL(B)) \quad \text{and} \quad
\|b_1\| \leq \bigl(\delta-1/4\bigr)^{-(n-1)/4} \cdot \vol(\tL(B))^{1/n} \label{equ:hermiteLLL}.
\end{equation}
  While these bounds can be reached, they are worst case bounds. In practice,
 LLL reduction algorithms behave much better \cite{ng06}.
One early attempt to improve the LLL reduction algorithm is due to Schnorr and
Euchner \cite{BKZ} who came up with the notion of a \dlll reduced basis:
\begin{definition}
    Let $\delta \in \left(1/4, 1\right]$.  A basis $B=[b_1,\dots,b_n]$
of a lattice
    $\tL(b_1,\dots,b_n)$ is called \emph{$\delta$-\dlll reduced with blocksize $\beta$} if and only if
it satisfies the
    following two conditions:
    \begin{enumerate}
      \item $\forall 1\leq j<i\leq n: |\mu_{i,j}| \leq \frac{1}{2}$ 
(size-reduced).
      \item  $\forall 1\leq k < \ell \leq n \mbox{ with } k\leq \beta \vee
\ell-k \leq \beta: \delta \cdot \|\pi_k(b_k)\|^2 \leq\|\pi_k(b_{\ell})\|^2$.
    \end{enumerate}
\end{definition}
If $\beta=n$ we simply call this a \dlll reduced basis.
While the first basis vector of \dlll reduced bases in the worst case does not
achieve a better Hermite factor than classical LLL (see Section
\ref{sec:critical}), the according reduction algorithms usually return much
shorter vectors than pure LLL.
Unfortunately no polynomial time algorithm to compute \dlll reduced bases is known. 

The following definition is used in the proof (see e.g. \cite{mi02}) of the polynomial running time of the LLL reduction algorithm and will play a main role in our improved variant of LLL.
\begin{definition}
The \emph{potential} $\pot(B)$ of a lattice basis $B=[b_1,\dots,b_n]$ is defined as
\[
\pot(B):=\prod^n_{i=1} \vol(\tL(b_1,\dots,b_i))^2=\prod^n_{i=1}
\|b^*_i\|^{2(n-i+1)}\,.
\]
\end{definition}
Here it is used that $\vol(\tL) = \prod_{i=1}^{n} \norm{b_i^*}$.
Note that, unlike the volume of the lattice, the potential of a basis is variant
under basis permutations. The following lemma describes how the potential
changes if $\sigma_{k,\ell}$ is applied to the basis.

\begin{lemma}
\label{lem:pot}
Let $B=[b_1,\dots,b_n]$ be a lattice basis. Then for $1\leq k \leq \ell \leq n$
\[
\pot(\sigma_{k,\ell}B)=\pot(B) \cdot \prod^{\ell}_{i=k} \frac{\|\pi_i(b_{\ell})\|^2}{\|\pi_i(b_i)\|^2}.
\]
\end{lemma}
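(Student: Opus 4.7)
The plan is to analyze, index by index, how each factor $\vol(\tL(b_1,\ldots,b_i))^2$ in the definition $\pot(B)=\prod_{i=1}^n \vol(\tL(b_1,\ldots,b_i))^2$ changes when $B$ is replaced by $\sigma_{k,\ell}B$, and to show that only the indices $i$ with $k\leq i\leq \ell$ can contribute a non-trivial factor. From the definition of $\sigma_{k,\ell}$, for $i<k$ the first $i$ vectors of $\sigma_{k,\ell}B$ literally equal $b_1,\ldots,b_i$, while for $i>\ell$ they are merely a reordering of $b_1,\ldots,b_i$ and hence span the same sublattice. In both regimes the $i$-th factor is unchanged, so those indices drop out of the ratio.

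The heart of the computation is then the range $k\leq i\leq \ell$. Here the first $i$ vectors of $\sigma_{k,\ell}B$ are $b_1,\ldots,b_{k-1},b_\ell,b_k,\ldots,b_{i-1}$, which generate the same sublattice as $b_1,\ldots,b_{i-1},b_\ell$ (a permutation of the same set of vectors). I would then apply Gram-Schmidt to the latter ordered list: the first $i-1$ Gram-Schmidt vectors are simply $b_1^*,\ldots,b_{i-1}^*$ from the original basis, and the $i$-th is exactly $\pi_i(b_\ell)$, since $\pi_i$ is by definition the orthogonal projection onto $\spa(b_1,\ldots,b_{i-1})^\bot$. Hence
\[
\vol(\tL(b_1,\ldots,b_{k-1},b_\ell,b_k,\ldots,b_{i-1})) = \vol(\tL(b_1,\ldots,b_{i-1}))\cdot \norm{\pi_i(b_\ell)},
\]
while the original $i$-th factor satisfies $\vol(\tL(b_1,\ldots,b_i)) = \vol(\tL(b_1,\ldots,b_{i-1}))\cdot \norm{\pi_i(b_i)}$. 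The ratio of the squared volumes for index $i$ is therefore $\norm{\pi_i(b_\ell)}^2/\norm{\pi_i(b_i)}^2$, and multiplying these ratios over $i=k,\ldots,\ell$ yields the stated formula.

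I do not expect any step to be a significant obstacle; the only bookkeeping to be careful about is the unimodular permutation argument that lets me move $b_\ell$ to the last position in each sublattice for $k\leq i\leq \ell$ without changing its volume. As a sanity check, the $i=\ell$ factor equals $\norm{\pi_\ell(b_\ell)}^2/\norm{\pi_\ell(b_\ell)}^2=1$, consistent with the fact that $\sigma_{k,\ell}$ leaves the full sublattice $\tL(b_1,\ldots,b_\ell)$ globally invariant.
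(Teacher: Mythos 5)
Your proof is correct, and it takes a genuinely different route from the paper's. The paper proves the identity by induction on $\ell - k$: it takes the classical swap identity $\pot(\sigma_{k,k+1}B) = \frac{\|\pi_k(b_{k+1})\|^2}{\|\pi_k(b_k)\|^2}\pot(B)$ as given, writes $\sigma_{k,\ell} = \sigma_{k,k+1}\sigma_{k+1,\ell}$, observes that $b_\ell$ is the $(k+1)$-th vector of $\sigma_{k+1,\ell}B$, and peels off one factor per step. You instead compute the ratio $\pot(\sigma_{k,\ell}B)/\pot(B)$ directly, factor by factor: indices $i<k$ are literally unchanged, indices $i>\ell$ change only by a permutation of the spanning set (hence contribute factor $1$), and for $k\le i\le\ell$ you identify the $i$-th Gram-Schmidt vector of the reordered prefix as $\pi_i(b_\ell)$, giving the ratio $\|\pi_i(b_\ell)\|^2/\|\pi_i(b_i)\|^2$. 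Your approach is more self-contained (it derives the swap identity as the special case $\ell=k+1$ rather than citing it), and it makes visible exactly which sublattice volumes change and why; the paper's induction is shorter because it offloads the real work to the well-known swap fact. Both are sound. One small remark on your own caveat: the ``unimodular permutation argument'' you flag as the thing to be careful about is immediate, since permuting the generators of a sublattice is a unimodular change of basis and hence does not affect its volume — no further bookkeeping is needed.
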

\begin{proof}
  First note that it is well-known that $\pot(\sigma_{k,k+1}B)= \|\pi_k(b_{k+1})\|^2 /
  \|\pi_k(b_k)\|^2 \cdot \pot(B)$. This property is used in the proofs of the polynomial running
  time of LLL \cite{LLL,mi02}.

  We prove the claim by induction over $\ell-k$.  The claim is true for $k=\ell$. For $k<\ell$,
  $\sigma_{k,\ell}=\sigma_{k,k+1}\sigma_{k+1,\ell}$. As $b_\ell$ is the $(k+1)$-th basis vector of
  $\sigma_{k+1,\ell}B$, with the above identity we get
  $\pot(\sigma_{k,\ell}B)=\pot(\sigma_{k,k+1}\sigma_{k+1,\ell}B)=\frac{\|\pi_k(b_{\ell})\|^2}{\|\pi_k(b_k)\|^2}
  \cdot \pot(\sigma_{k+1,\ell}B)$, which completes the proof.
\end{proof}

\section{The Potential-LLL Reduction}
\label{sec:algorithm}

In this section we present our polynomial time variant of \dlll. We start
with the definition of a $\delta$-\plll\ reduced basis. Then we present an
algorithm that outputs such a basis followed by a runtime proof. 
\begin{definition}
\label{def:potLLL}
Let $\delta \in (1/4,1]$. A lattice basis $B=[b_1,\dots,b_n]$ is \emph{$\delta$-\plll reduced} if and only if
\begin{enumerate}
\item $\forall 1\leq j<i\leq n: |\mu_{i,j}| \leq \frac{1}{2}$ (size-reduced).
\item $\forall 1\leq k<\ell\leq n: \delta \cdot \pot(B) \leq
\pot(\sigma_{k,\ell}B)$.
\end{enumerate}

\end{definition}

\begin{lemma}
A $\delta$-\plll reduced basis $B$ is also $\delta$-LLL reduced. 
\end{lemma}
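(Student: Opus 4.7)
The plan is to observe that the two conditions defining $\delta$-LLL reducedness can be extracted directly from the two conditions defining $\delta$-\plll reducedness. Size-reduction is literally the first condition in both definitions, so nothing needs to be done there. The only real content is to derive the Lov\'asz condition from the potential condition, and this amounts to specializing the \plll condition to the case $\ell = k+1$ and invoking Lemma \ref{lem:pot}.

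Concretely, I would fix an index $1 \leq k < n$ and apply the \plll reducedness inequality with $\ell = k+1$, yielding $\delta \cdot \pot(B) \leq \pot(\sigma_{k,k+1} B)$. By Lemma \ref{lem:pot} the right-hand side simplifies: the product $\prod_{i=k}^{k+1} \|\pi_i(b_{k+1})\|^2 / \|\pi_i(b_i)\|^2$ has the $i = k+1$ factor equal to $1$, so only the $i = k$ term survives and
\[
\pot(\sigma_{k,k+1} B) = \pot(B) \cdot \frac{\|\pi_k(b_{k+1})\|^2}{\|\pi_k(b_k)\|^2}.
\]
Dividing by $\pot(B) > 0$ and multiplying by $\|\pi_k(b_k)\|^2$ gives $\delta \cdot \|\pi_k(b_k)\|^2 \leq \|\pi_k(b_{k+1})\|^2$, which is exactly the Lov\'asz condition for the index $k$. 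Since $k$ was arbitrary, the condition holds for all $k$.

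There is essentially no obstacle here; the statement is really a sanity check that the \plll notion strengthens the LLL notion. The only mildly nontrivial step is recognising that the $i = k+1$ factor in Lemma \ref{lem:pot} cancels to one, which is why applying the potential condition to adjacent indices recovers the classical swap criterion.
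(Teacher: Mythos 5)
Your proof is correct and follows the same route as the paper: specialize the potential condition to adjacent indices $\ell = k+1$ and use Lemma~\ref{lem:pot} to recover the Lov\'asz condition, with size-reduction being identical in both definitions. Your explicit observation that the $i = k+1$ factor in the product equals $1$ is a correct and slightly more detailed rendering of the step the paper states without comment.
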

\begin{proof}
  Lemma~\ref{lem:pot} shows that $\delta \cdot \pot(B) \leq \pot(\sigma_{i,i+1}B)$ if and only if
  $\delta \|\pi_i(b_i)\|^2 \leq\|\pi_i(b_{i+1})\|^2$. Thus the Lov\'asz condition is implied by the
  second condition in Definition~\ref{def:potLLL} restricted to consecutive pairs, i.e.\ $\ell = k +
  1$.
\end{proof}

\begin{lemma}
\label{lem:dlllplll}
For $\delta \in \bigl(4^{-1/(n-1)},1\bigr]$, a $\delta$-\dlll reduced basis
$B$ is also $\delta^{n-1}$-\plll reduced.
\end{lemma}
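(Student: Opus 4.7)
The plan is to exploit the explicit expression from Lemma~\ref{lem:pot} for how the potential changes under $\sigma_{k,\ell}$ and bound each factor in the resulting product by $\delta$ using the DeepLLL condition.

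First, the size-reducedness condition in the definition of $\delta$-PotLLL is identical to that of $\delta$-DeepLLL, so it transfers for free. The substance lies in verifying the second condition, namely that $\delta^{n-1}\cdot\pot(B) \leq \pot(\sigma_{k,\ell}B)$ for all $1 \leq k < \ell \leq n$.

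By Lemma~\ref{lem:pot},
\[
\frac{\pot(\sigma_{k,\ell}B)}{\pot(B)} = \prod_{i=k}^{\ell} \frac{\|\pi_i(b_\ell)\|^2}{\|\pi_i(b_i)\|^2},
\]
where the $i=\ell$ factor equals $1$ since $\pi_\ell(b_\ell) = b_\ell^*$. For each remaining index $i$ with $k \leq i < \ell$, the $\delta$-DeepLLL condition (applied to the pair $(i,\ell)$) gives $\delta\cdot\|\pi_i(b_i)\|^2 \leq \|\pi_i(b_\ell)\|^2$, i.e.\ each factor in the product is at least $\delta$. Hence the product is at least $\delta^{\ell-k}$, and since $\delta \leq 1$ and $\ell - k \leq n-1$, this is bounded below by $\delta^{n-1}$, as required.

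Finally, the hypothesis $\delta > 4^{-1/(n-1)}$ is exactly what is needed to ensure $\delta^{n-1} > 1/4$, so that $\delta^{n-1}$ lies in the admissible range $(1/4,1]$ from Definition~\ref{def:potLLL} and the statement ``$\delta^{n-1}$-PotLLL reduced'' is well-defined. The argument is essentially a one-line chain of inequalities once Lemma~\ref{lem:pot} is in hand; the only mild subtlety is remembering that the $i=\ell$ term contributes trivially and that the weakening from $\delta^{\ell-k}$ to $\delta^{n-1}$ requires $\delta \leq 1$. There is no real obstacle here.
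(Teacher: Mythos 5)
Your proof is correct and takes essentially the same approach as the paper: both use Lemma~\ref{lem:pot} to reduce the claim to a statement about the product $\prod_{i=k}^{\ell-1}\|\pi_i(b_\ell)\|^2/\|\pi_i(b_i)\|^2$, and both invoke the pairwise DeepLLL condition on the factors. The only difference is that you argue directly (each factor is $\geq\delta$, hence the product is $\geq\delta^{\ell-k}\geq\delta^{n-1}$), whereas the paper argues by contradiction (if the product were $<\delta^{n-1}$, some factor would be $<\delta^{(n-1)/(\ell-k)}\leq\delta$); these are contrapositives of the same inequality chain, so the substance is identical, and if anything your direct version is the cleaner of the two.
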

\begin{proof}
  We proceed by contradiction.  Assume that $B$ is not $\delta^{n-1}$-\plll reduced, i.e. there
  exist $1\leq k < \ell \leq n$ such that $\delta^{n-1}\pot(B)> \pot(\sigma_{k,\ell}B)$. By
  Lemma~\ref{lem:pot} this is equivalent to
  \[
  \delta^{n-1} > \prod^\ell_{i=k}
  \frac{\|\pi_i(b_\ell)\|^2}{\|\pi_i(b_i)\|^2}=\prod^{\ell-1}_{i=k}
  \frac{\|\pi_i(b_\ell)\|^2}{\|\pi_i(b_i)\|^2}\,.
  \]
  It follows that there exist a $j \in [k, \ell-1]$ such that
  $\|\pi_j(b_\ell)\|^2/\|\pi_j(b_j)\|^2<\delta^{(n-1)/(\ell-k)}\leq\delta$ which implies that $B$ is
  not $\delta$-\dlll reduced.
\end{proof}

\subsection{High-Level Description}
\label{S:hldesc}

A high-level version of the algorithm is presented as Algorithm~\ref{alg:potLLL}. The algorithm is
very similar to the classical LLL algorithm and the classical \dlll reduction by Schnorr and Euchner
\cite{BKZ}. During its execution, the first $\ell-1$~basis vectors are always $\delta$-\plll
reduced (this guarantees termination of the algorithm). As opposed to classical
LLL, and similar to \dlll, $\ell$ might decrease by more than
one. This happens precisely during deep insertions: in these cases, the $\ell$-th vector is not
swapped with the $(\ell-1)$-th one, as in classical LLL, but with the $k$-th one for $k < \ell -
1$. In case $k = \ell - 1$, this equals the swapping of adjacent basis vectors as in classical LLL.
The main difference of \plll\ and \dlll\ is the condition that controls
insertion of a vector.

\paragraph{Preprocessing} On line~\ref{alg:potLLL:pp} we LLL reduce the input basis before
proceeding. It turns out that while omitting this preprocessing does not change the output quality
of the bases on average (see Figure~\ref{fig:approx-pp-vs-up}), it is on average beneficial when it
comes to the running time (see Figure~\ref{fig:time-pp-vs-up}). Note that most implementations of
BKZ also preprocess the input basis with LLL. The figures can be found on
page~\pageref{fig:pp-vs-up}, and a more detailed discussion in Section~\ref{S:exp:preproc}.

\paragraph{\pllltwo} On line~\ref{alg:potLLL:min} the insertion depth is chosen such that the
potential of the basis is minimal under the insertion. Alternatively one can choose the insertion
place $k$ as $\min\{k: \pot(\sigma_{k,\ell}B)<\delta \cdot \pot(B)\}$.  Neither the running time
analysis nor the fact that the output basis is \plll reduced is changed. We refer to this variant of
\plll as \pllltwo.

\begin{algorithm}[htbp]
  \caption{Potential LLL}
  \label{alg:potLLL}
  \SetKwComment{Comment}{$\vartriangleright$~}{}
  \SetCommentSty{textit}
  \DontPrintSemicolon
  
  \KwIn{Basis $B \in \Z^{n \times m}$, $\delta \in (1/4,1]$}
  \KwOut{A $\delta$-\plll reduced basis.}
  $\delta$-LLL reduce $B$\; \label{alg:potLLL:pp}
  $\ell \leftarrow 1$\;
  \While{$\ell \leq n$}{
    	Size-reduce$(B)$\;\label{alg:potLLL:sizereduce}
	$k \leftarrow \argmin_{1\leq j\leq \ell}\pot(\sigma_{j,\ell}B)$\label{alg:potLLL:min}\;	
	\eIf{$\delta \cdot \pot(B)>\pot(\sigma_{k,\ell}B$)\label{alg:potLLL:if}}{
			$B \leftarrow \sigma_{k,\ell}B$\;
			$\ell \leftarrow k$ \;
	}{
		$\ell \leftarrow \ell+1$ \;
	}
  }
\Return $B$\;
\end{algorithm}

\subsection{Detailed Description}
There are two details to consider when implementing Algorithm~\ref{alg:potLLL}. The first one is
that since the basis vectors~$b_1, \dots, b_{\ell-1}$ are already $\delta$-\plll reduced, they are
in particular also size-reduced. Moreover, the basis vectors $b_{\ell+1}, \dots, b_n$ will be
considered later again. So in line~\ref{alg:potLLL:sizereduce} of the algorithm
it suffices to
size-reduce $b_\ell$ by $b_1, \dots, b_{\ell-1}$ as in classical LLL. Upon termination, when $\ell =
n + 1$, the whole basis will be size-reduced.

Another thing to consider is the computation of the potentials of $B$ and $\sigma_{j,\ell} B$ for $1
\le j \le \ell$ in line~\ref{alg:potLLL:min}.  Computing the potential of the basis is a rather slow
operation. But we do not need to compute the potential itself, but only compare
$\pot(\sigma_{k,\ell} B)$ to $\pot(B)$; by Lemma~\ref{lem:pot}, this quotient can be efficiently
computed. Define $P_{k,\ell} := \pot(\sigma_{k,\ell} B) / \pot(B)$. The ``if''-condition in
line~\ref{alg:potLLL:if} will then change to $\delta > P_{k,\ell}$, and the minimum in
line~\ref{alg:potLLL:min} will change to $\argmin_{1 \leq j \leq \ell} P_{j,\ell}$. Using
$P_{\ell,\ell} = 1$ and
\begin{equation}
\label{equ:P_jl}
P_{j,\ell}= \frac{\pot(\sigma_{j,\ell}
B)}{\pot(B)}=P_{j+1,\ell} \cdot \frac{\|\pi_{j}(b_\ell)\|^2}{\|\pi_{j}(b_{j})\|^2} =
P_{j+1,\ell} \cdot \frac{\|b_\ell^*\|^2+ \sum_{i=j}^{\ell-1} \mu_{\ell,i}^2
\|b_i^*\|^2}{\|b_{j}^*\|^2}
\end{equation}
for $j < \ell$ (Lemma~\ref{lem:pot}), we can quickly determine $\argmin_{1 \leq j \leq \ell}
P_{j,\ell}$ and check whether $\delta > P_{k,\ell}$ if $j$ minimizes $P_{j,\ell}$.

 A detailed version of Algorithm~\ref{alg:potLLL} with these steps
filled in is described as Algorithm~\ref{alg:potLLL:full}.
On line~\ref{alg:potLLL:full:potmod} of Algorithm~\ref{alg:potLLL:full}, $P_{j,\ell}$ is iteratively computed as in
Equation~\eqref{equ:P_jl}.  Clearly, the algorithm could be further improved by iteratively
computing $\|\pi_j(b_\ell)\|^2$ from $\|\pi_{j+1}(b_\ell)\|^2$.  Depending on the implementation of
the Gram-Schmidt orthogonalization, this might already have been computed and stored. For example,
when using the Gram-Schmidt orthogonalization as described in Figure~4
of~\cite{nguyen-stehle-fplll-revisited}, then $\|\pi_j(b_\ell)\|^2 = s_{j-1}$ after computation of
$\|b_\ell^*\|^2$ and $\mu_{\ell,j}$ for $1 \le j < \ell$.

\begin{algorithm}[htbp]
  \caption{Potential LLL, detailed version}
  \label{alg:potLLL:full}
  \SetKwComment{Comment}{$\vartriangleright$~}{}
  \SetCommentSty{textit}
  \DontPrintSemicolon
  
  \KwIn{Basis $B \in \Z^{n \times m}$, $\delta \in (1/4,1]$}
  \KwOut{A $\delta$-\plll reduced basis.}
  $0.99$-LLL reduce $B$\; \label{alg:potLLL:full:pp}
  $\ell \leftarrow 1$\;
  \While{$\ell \leq n$\label{alg:potLLL:full:while}}{
        Size-reduce$(b_\ell \text{ by } b_1, \dots, b_{\ell-1})$\;\label{alg:potLLL:full:sizereduce}
        Update($\|b_\ell^*\|^2$ and $\mu_{\ell,j}$ for $1 \le j <
\ell$)\;\label{alg:potLLL:full:update1}
        $P \leftarrow 1$, \quad $P_{\min} \leftarrow 1$, \quad $k \leftarrow 1$ \;
        \For{$j=\ell-1$ down to $1$}{
                $P \leftarrow P \cdot \frac{ \|b_\ell^*\|^2 +\sum_{i=j}^{\ell-1} \mu_{\ell,i}^2 \|b_i^*\|^2}{\|b_j^*\|^2}$\label{alg:potLLL:full:potmod} \;
                \If{$P < P_{\min}$}{
                        $k \leftarrow j$ \;
                        $P_{\min} \leftarrow P$ \;
                }
        }
        \eIf{$\delta > P_{\min}$\label{alg:potLLL:full:if}}{
                $B \leftarrow \sigma_{k,\ell}B$\;
                Update($\|b_k^*\|^2$ and $\mu_{k,j}$ for $1 \le j <
k$)\;\label{alg:potLLL:full:update2}
                $\ell \leftarrow k$ \;
        }{
                $\ell \leftarrow \ell+1$ \;
        }
  }
\Return $B$\;
\end{algorithm}

\subsection{Complexity Analysis}

Here we show that the number of operations in the \plll algorithm are
bounded polynomially in the dimension $n$ and the logarithm of the input size.
We present the runtime for Algorithm~\ref{alg:potLLL:full}.

\begin{proposition}
\label{prop:potLLL2}
Let $\delta \in (1/4, 1)$ and $C = \max_{i=1\ldots n} \norm{b_i}^2$. Then
Algorithm~\ref{alg:potLLL:full} performs $\grossO{n^3 \log_{1/\delta}(C)}$ iterations of the
\texttt{while} loop in line~\ref{alg:potLLL:full:while} and a total of $\grossO{m n^4
  \log_{1/\delta}(C)}$ arithmetic operations.
\end{proposition}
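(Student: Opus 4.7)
The plan is to adapt the standard LLL potential argument, using $\pot(B)$ as the progress measure. The first step is to identify where the potential can change. Size-reduction on line~\ref{alg:potLLL:full:sizereduce} replaces $b_\ell$ by $b_\ell - \sum_{j<\ell} c_j b_j$ with $c_j \in \Z$, which leaves $b_\ell^* = \pi_\ell(b_\ell)$ and all earlier $b_j^*$ intact, hence $\pot(B)$ is unchanged. Only the deep insertion $\sigma_{k,\ell}$ executed on the \texttt{if}-branch of line~\ref{alg:potLLL:full:if} affects $\pot(B)$, and by the selection criterion together with Lemma~\ref{lem:pot} this multiplies $\pot(B)$ by $P_{\min} < \delta$.

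Next I would bound $\pot(B)$ from above and below. Integrality of the lattice forces $\vol(\tL(b_1,\dots,b_i))^2$ to be a positive integer, so $\pot(B) \geq 1$ throughout the execution. On the input, Hadamard's inequality gives $\|b_i^*\| \leq \|b_i\| \leq \sqrt{C}$, so
\[
\pot(B) \leq \prod_{i=1}^n C^{n-i+1} = C^{n(n+1)/2},
\]
and the $0.99$-LLL preprocessing on line~\ref{alg:potLLL:full:pp} only decreases the potential further. Combined with the per-insertion shrink factor $\delta$, the number $D$ of iterations taking the \texttt{if}-branch satisfies $\delta^D \geq C^{-n(n+1)/2}$, i.e., $D = \grossO{n^2 \log_{1/\delta}(C)}$.

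Passing from $D$ to the total number of \texttt{while}-iterations requires a simple accounting argument on the variable~$\ell$. Each \texttt{else}-iteration increases $\ell$ by~$1$, while the $i$-th \texttt{if}-iteration decreases it by $\ell_i - k_i \leq n-1$. Since $\ell$ evolves from $1$ to $n+1$ over the whole run, the number $I$ of \texttt{else}-iterations satisfies $I = n + \sum_{i=1}^{D}(\ell_i - k_i) \leq n + (n-1)D$, and hence the total iteration count is $I + D = \grossO{nD} = \grossO{n^3 \log_{1/\delta}(C)}$.

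For the arithmetic cost per iteration, size-reducing $b_\ell$ against $b_1,\dots,b_{\ell-1}$ and updating $\|b_\ell^*\|^2$ together with the $\mu_{\ell,j}$ each costs $\grossO{nm}$; the quantities $P_{j,\ell}$ can be evaluated in $\grossO{n}$ total using \eqref{equ:P_jl} with a running numerator sum; and the post-insertion update on line~\ref{alg:potLLL:full:update2} is again $\grossO{nm}$. Multiplying $\grossO{nm}$ per iteration by the iteration bound yields the claimed $\grossO{mn^4 \log_{1/\delta}(C)}$. The main obstacle I anticipate is the $\ell$-accounting in the third paragraph: one has to notice that a single \texttt{if}-step can shrink $\ell$ by up to $n-1$, which is precisely the source of the extra factor of $n$ between the potential-based bound on $D$ and the overall iteration bound, and one must also check that the (possibly stale) Gram-Schmidt data for positions above the new $\ell$ does no harm because it is always refreshed on line~\ref{alg:potLLL:full:update1} before being used again.
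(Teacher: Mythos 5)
Your proof is correct and follows essentially the same route as the paper's: bound the potential from above by a power of $C$ using Hadamard, observe that each deep insertion shrinks $\pot(B)$ by a factor $\delta$, invoke a lower bound on $\pot(B)$ (you use integrality of $\vol(\tL(b_1,\dots,b_i))^2$; the paper uses $\pot(B)\geq\vol(\tL)$, which amounts to the same thing for integer lattices) to get $\grossO{n^2\log_{1/\delta}(C)}$ insertions, and then do the $\ell$-accounting to lose another factor of $n$ and multiply by the per-iteration cost $\grossO{nm}$. The only cosmetic differences are the exponent $n(n+1)/2$ in place of the paper's $n(n-1)/2$ (irrelevant for the big-O) and the lower-bound normalization.
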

\begin{proof}
Let us start by upper bounding the potential $I$ of the input basis with respect to $C$. 
Let
$d_j := \vol\left(\tL(b_1,\dots,b_j)\right)^2 = \prod_{i=1}^j \norm{b_i^*}^2$ for $j=1,\dots,n$.
Recall that $\norm{b_i^*}^2 \leq \norm{b_i}^2 \leq C$ for $i=1,\dots,n$ and hence $d_j<C^j$. Consequently we have the following upper bound on the potential
\begin{equation}
\label{equ:I}
I = \prod_{j=1}^{n-1} d_j \cdot \vol(\tL) \leq \prod_{j=1}^{n-1} C^j \cdot \vol(\tL) \leq
C^{\frac{n(n-1)}{2}} \cdot \vol(\tL)\,.
\end{equation}
Now, by a standard argument, we show that the number of iterations of the
while loop is bounded by $\grossO{n^3 \log_{1/\delta}(C)}$.
In each iteration, either the iteration counter~$\ell$ is increased by 1, or an
insertion takes place
and $\ell$ is decreased by at most $n - 1$. In the
insertion case, the potential is decreased by a factor at least $\delta$. So
after $N$ swaps the potential $I_N$ satisfies $I \geq (1/\delta)^NI_N \geq
(1/\delta)^N \cdot \vol(\tL)$ using the fact that $I_N \geq \vol(\tL)$.
Consequently the number of swaps~$N$ is
bounded by $N \leq \log_{1/\delta} (I / \vol(\tL))$.
By Equation (\ref{equ:I}) we get that $N \leq \log_{1/\delta}\bigl(C^{n(n-1)/2}\bigr)$.
Now note that
the number~$M$ of iterations where~$\ell$ is increased by 1 is at most $M \le (n - 1) \cdot N + n$.
This shows that the number of iterations is bounded by $\grossO{n^3
\log_{1/\delta}(C)}$.

Next we show that the number of operations performed in each iteration of the \texttt{while} loop is
dominated by $\grossO{n m}$ operations.
Size-reduction (line~\ref{alg:potLLL:full:sizereduce}) and the first update
step (line~\ref{alg:potLLL:full:update1}) can be done in \grossO{n m} steps. The
for-loop consists of \grossO{n} iterations where the most expensive operation
is the update of $P$ in line~\ref{alg:potLLL:full:potmod}. Therefore the loop
requires \grossO{n m} arithmetic operations. Insertion can be done in \grossO{n}
operations, whereas the second update in line~\ref{alg:potLLL:full:update2}
requires again \grossO{n m} operations.

It follows that each iteration costs at most \grossO{n m} arithmetic operations. This shows
that in total the algorithm performs $\grossO{m n^4 \log(C)}$ operations.
\end{proof}

\subsection{Worst-Case Behavior}
\label{sec:critical}
For $\delta=1$, there exist so called \emph{critical bases} which are
$\delta$-LLL reduced bases and whose Hermite factor reaches the worst case bound
in~(\ref{equ:hermiteLLL})~\cite{sc94}.
These bases can be adapted to form a \dlll reduced basis where the first vector
reaches the worst case bound  in~(\ref{equ:hermiteLLL}).

\begin{proposition}\label{prop:critical}
For $\alpha = \sqrt{3/4}$,  the rows of $B = A_n(\alpha)$ (see below) define a
$\delta$-\dlll reduced basis with $\delta=1$ and
$\|b_1\|^2=\frac{1}{\alpha^{(n-1)/2}} \vol(\tL(A_n))^{1/n}$.
\end{proposition}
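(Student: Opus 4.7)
The plan is to work directly from the explicit Gram–Schmidt data of the classical critical basis $A_n(\alpha)$: its Gram–Schmidt vectors are mutually orthogonal with geometrically decaying lengths satisfying $\|b_i^*\|^2 = \alpha^{2(i-1)} \|b_1\|^2$ (so with $\alpha^2 = 3/4$, $\|b_i^*\|^2/\|b_{i-1}^*\|^2 = 3/4$, i.e.\ the Lov\'asz condition with $\delta=1$ is met with equality), and its Gram–Schmidt coefficients satisfy $|\mu_{i,j}| = \tfrac{1}{2}$ for all $1 \le j < i \le n$. These two pieces of information are all that is needed to check each condition in the definition of a $\delta$-\dlll reduced basis with $\delta = 1$ and $\beta = n$.

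Size-reducedness is immediate from $|\mu_{i,j}| \le \tfrac{1}{2}$. The substantive step is verifying $\|\pi_k(b_k)\|^2 \le \|\pi_k(b_\ell)\|^2$ for every pair $1 \le k < \ell \le n$. Using the standard identity
\[
\|\pi_k(b_\ell)\|^2 = \|b_\ell^*\|^2 + \sum_{j=k}^{\ell-1} \mu_{\ell,j}^2 \|b_j^*\|^2
\]
together with $\mu_{\ell,j}^2 = 1/4$ and $\|b_j^*\|^2 = (3/4)^{j-1} \|b_1\|^2$, the right-hand side becomes
\[
\left( (3/4)^{\ell-1} + \tfrac{1}{4} \sum_{j=k}^{\ell-1} (3/4)^{j-1} \right) \|b_1\|^2 .
\]
A short geometric-sum evaluation rewrites the second term as $(3/4)^{k-1} - (3/4)^{\ell-1}$, so the bracket collapses to $(3/4)^{k-1}$ and therefore $\|\pi_k(b_\ell)\|^2 = (3/4)^{k-1}\|b_1\|^2 = \|b_k^*\|^2 = \|\pi_k(b_k)\|^2$. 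Equality holds for every $k < \ell$, proving the \dlll condition at $\delta = 1$.

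Finally, the Hermite-factor claim follows from $\vol(\tL(A_n)) = \prod_{i=1}^n \|b_i^*\| = \|b_1\|^n \cdot \alpha^{n(n-1)/2}$, which rearranges to the stated expression relating $\|b_1\|$ to $\vol(\tL(A_n))^{1/n}$ by a factor $\alpha^{-(n-1)/2}$. The only real obstacle is the geometric-sum cancellation in the second condition; once one notices that the $\mu_{\ell,j}^2 \|b_j^*\|^2$ contributions exactly fill the gap between $\|b_\ell^*\|^2$ and $\|b_k^*\|^2$, everything else is direct substitution and the proof is complete.
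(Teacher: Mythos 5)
Your proof is correct and follows essentially the same route as the paper's: read off the Gram--Schmidt data ($\|b_i^*\|^2=\alpha^{2(i-1)}$, $\mu_{i,j}=\tfrac12$) from the lower-triangular form, then show via the geometric sum that $\|\pi_k(b_\ell)\|^2=\|\pi_k(b_k)\|^2$ for all $k<\ell$, so the \dlll{} condition holds with equality at $\delta=1$. The only point worth tightening is the last step: the stated identity involves $\|b_1\|^2$ rather than $\|b_1\|$, so you should note explicitly that $b_1=(1,0,\dots,0)$ gives $\|b_1\|=1$, which is what makes your rearrangement of $\vol(\tL)=\|b_1\|^n\alpha^{n(n-1)/2}$ yield the claim as stated.
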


\begin{equation}
A_n(\alpha):=\left(
\begin{matrix}
1			&	0				& 	 \cdots			& \cdots 		&		\cdots			&0\\
\frac{1}{2}	& 	\;\alpha 			&	\ddots
&			&						&
\vdots\\
\vdots		&	\;\frac{\alpha}{2}	& \;\alpha^2
& \ddots		&						& \vdots \\
 \vdots 	&	\vdots 			& \frac{\alpha^2}{2}	&\ddots		&	\ddots				&  \vdots \\
 \vdots		&	\vdots			&\vdots 				&\ddots		&\alpha^{n-2}				& 0 \\
\frac{1}{2}	&	\;\frac{\alpha}{2}	&\;\frac{\alpha^2}{2}	& \hdots
		& \;\frac{\alpha^{n-2}}{2}	& \;\alpha^{n-1}
\end{matrix}
\right)
\end{equation}

\begin{proof}
From the diagonal form of $A_n$ it is easy to see that $\vol(\tL)=\det(A_n)=\alpha^{n(n-1)/2}$.
Hence $\|b_1\|^2=1=1/\alpha^{(n-1)/2}\vol(\tL)$. 
It remains to show that $A_n$ is \dlll reduced. Note that the
orthogonalized basis $B^*$ is a diagonal
matrix with the same entries on the diagonal as $B$.
Note that it is size reduced as for all $1\leq j<i\leq n$ we have
$\mu_{i,j}=\langle b_i,b^*_j\rangle/\langle b^*_j,b^*_j
\rangle=\tfrac{1}{2} \alpha^{2 (j-1)}/\alpha^{2 (j-1)}=\frac{1}{2}$.
Further, using that $\pi_j(b_i)=b_i^* +
\sum_{\ell=j}^{i-1}\mu_{i,\ell}b_{\ell}^*$, we have that
\[
\|\pi_j(b_i)\|^2=\alpha^{2(i-1)}+\frac{1}{4}\sum^{i-1}_{\ell=j}
\alpha^{2(\ell-1)}=\alpha^{2(j-1)}\left(\frac{1}{4}\sum^{i-j-1}_{\ell=0}
\alpha^{2\ell}+\alpha^{2(i-j)}\right)\;.
\]
As for $\alpha=\sqrt{3/4}$, we have that
$\frac{1}{4}\sum^{i-j-1}_{\ell=0} \alpha^{2\ell}+\alpha^{2(i-j)}=1$,
and hence $\|\pi_j(b_i)\|^2=\alpha^{2(j-1)}= \|\pi_j(b_j)\|^2$. Therefore, the
norms of the projections for fixed~$j$ are all equal, and $A_n(\alpha)$ is $\delta$-\plll reduced with
$\delta=1$.
\end{proof}
Using Lemma~\ref{lem:dlllplll}, we obtain:
\begin{corollary}
For $\alpha = \sqrt{3/4}$,  the rows of $A_n(\alpha)$ define a $\delta$-\plll reduced basis
with $\delta=1$ and $\|b_1\|^2=\frac{1}{\alpha^{(n-1)/2}} \vol(\tL(A_n))^{1/n}$. \qed
\end{corollary}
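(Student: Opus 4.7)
The plan is to deduce the corollary as an immediate combination of the two preceding results, with essentially no additional computation. Proposition~\ref{prop:critical} already establishes two things about the matrix $A_n(\alpha)$ with $\alpha = \sqrt{3/4}$: first, that the rows form a $\delta$-\dlll reduced basis with $\delta = 1$; and second, that the first row satisfies the equality $\|b_1\|^2 = \alpha^{-(n-1)/2} \vol(\tL(A_n))^{1/n}$. The corollary asserts the same norm identity together with $1$-\plll reducedness, so the only thing left to verify is the transition from \dlll\ to \plll.

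For this I would invoke Lemma~\ref{lem:dlllplll}, which states that a $\delta$-\dlll reduced basis is $\delta^{n-1}$-\plll reduced whenever $\delta \in \bigl(4^{-1/(n-1)}, 1\bigr]$. Plugging in $\delta = 1$ is legitimate (since $1$ lies in that half-open interval), and it gives $\delta^{n-1} = 1$, so the basis is $1$-\plll reduced. The volume and norm identity from Proposition~\ref{prop:critical} carry over verbatim, since they are properties of the matrix itself and are independent of which reducedness notion we consider. Combining these two observations yields the corollary.

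There is no real obstacle here: everything reduces to checking that $\delta = 1$ is an admissible parameter value in Lemma~\ref{lem:dlllplll} and that the norm identity is inherited trivially. The only minor thing worth a sentence in the write-up is to remark that one can alternatively read off $1$-\plll reducedness directly from the computation in the proof of Proposition~\ref{prop:critical}: since that proof shows $\|\pi_j(b_i)\|^2 = \|\pi_j(b_j)\|^2$ for all $j < i$, Lemma~\ref{lem:pot} immediately gives $\pot(\sigma_{k,\ell}B)/\pot(B) = 1$ for every $1 \leq k < \ell \leq n$, so condition~(2) of Definition~\ref{def:potLLL} holds with equality for $\delta = 1$.
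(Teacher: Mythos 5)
Your proposal is correct and matches the paper's route exactly: the corollary is introduced by the phrase ``Using Lemma~\ref{lem:dlllplll}, we obtain,'' so the paper likewise deduces $1$-\plll reducedness by applying Lemma~\ref{lem:dlllplll} at the admissible value $\delta = 1$ to the $1$-\dlll reduced basis of Proposition~\ref{prop:critical}, with the norm identity carried over unchanged. Your closing observation is also apt: the proof of Proposition~\ref{prop:critical} in fact establishes $\|\pi_j(b_i)\|^2 = \|\pi_j(b_j)\|^2$ for all $j<i$ and literally concludes ``$A_n(\alpha)$ is $\delta$-\plll reduced with $\delta = 1$,'' so the detour through Lemma~\ref{lem:dlllplll} is strictly speaking unnecessary.
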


\section{Experimental Results}\label{sec:experiments}
Extensive experiments have been made to examine how the classical LLL reduction algorithm performs
in practice \cite{ng06,gam08}. We ran extensive experiments to compare our \plll 
algorithms to our implementations of LLL, \dlll, and BKZ.

\subsection{Setting}
We run the following algorithms, each with the standard reduction parameter $\delta=0.99$:
\begin{enumerate}
\item classical LLL,
\item \plll and \pllltwo,
\item \dlll with blocksize $\beta=5$ and $\beta=10$,
\item BKZ with blocksize $5$ (BKZ-5) and $10$ (BKZ-10).
\end{enumerate}
The implementations all use the same arithmetic back-end. Integer arithmetic is done using GMP, and
Gram-Schmidt arithmetic is done as described in~\cite[Figures~4 and
5]{nguyen-stehle-fplll-revisited}. As floating point types, \texttt{long double} (x64 extended
precision format, 80~bit representation) and MPFR arbitrary precision floating point numbers are
used with a precision as described in \cite{nguyen-stehle-fplll-revisited}. The implementations of
\dlll and BKZ follow the classical description in \cite{BKZ}. \plll was implemented as described in
Algorithm~\ref{alg:potLLL:full} (page~\pageref{alg:potLLL:full}). Our implementation will be made
publicly available.

We ran experiments in dimensions 40 to 400, considering the dimensions which are multiples of~10.
Some algorithms become too slow in high dimensions, whence we restrict the dimensions for these as
follows: For \dlll with $\beta = 10$ we ran experiments up to dimension~300 and for \pllltwo and
BKZ-10 up to dimension 350.

In each dimension,
we considered 50 random lattices. More precisely, we used the HNF bases of the lattices of seed 0 to 49 from the SVP
Challenge.\footnote{\url{http://www.latticechallenge.org/svp-challenge}}

All experiments were run on Intel\textsuperscript{\textregistered}
Xeon\textsuperscript{\textregistered} X7550 CPUs at 2~GHz on a shared memory machine. For dimensions
40 up to 160, we used \texttt{long double} arithmetic, and for dimensions above 160, we used
MPFR. In dimension 160, we did the experiments both using \texttt{long double} and MPFR
arithmetic. The reduced lattices did not differ. In dimension~170, floating point errors prevented
the \texttt{long double} arithmetic variant to complete on some of the lattices.

\subsection{Preprocessing}
\label{S:exp:preproc}
As mentioned in Section~\ref{S:hldesc}, we added a ``preprocessing'' step to \plll, \pllltwo and
\dlll, by first running LLL without any deep insertions and with the same reduction parameter on the
basis, and only then running \plll resp.\ \dlll. We performed all experiments both with and without
this preprocessing, except that without preprocessing, we left out certain higher dimensions. More
precisely, \plll was run until dimension~400, \plll was run until dimension~300, \dlll with $\beta =
5$ up to dimension~320, and \dlll with $\beta = 10$ up to dimension~250.

Figure~\ref{fig:approx-pp-vs-up} (see page~\pageref{fig:pp-vs-up}) shows the average $n$-th root
Hermite factor for the resulting bases. It appears that while preprocessing can have both a positive
and negative impact on the output quality, it in general does not change the average $n$-th root
Hermite factor. This was to be expected, since essentially we applied \plll resp.\ \dlll to two
different bases of the same lattice: one in Hermite Normal Form, and the other 0.99-LLL reduced.

When comparing the timing results, on the other hand, there are large
differences. Figure~\ref{fig:approx-pp-vs-up} shows the timing in dimensions~160 up to 400 for \dlll
and \plll with and without preprocessing. The times for the algorithms with preprocessing include
the time needed for applying LLL with $\alpha = 0.99$. It is clear that the algorithms with
preprocessing are significantly faster than the ones without.

We conclude that while preprocessing does not change the output quality in average, it has a huge
impact on the running time. For this reason, and also to have a better comparison to BKZ which
always applies LLL first, we restricted to the algorithms with preprocessing for the rest of the
experiments.

\subsection{Results}

For each run, we recorded the length of the shortest vector as well as the required CPU time for the
reduction. Our main interest lies in the $n$-th root of the \emph{Hermite factor}
$\frac{\|b_1\|}{\vol(\tL)^{1/n}}$, where $b_1$ is the shortest vector of the basis of $\tL$
returned.

Figure~\ref{fig:overview:apfa} (see page~\pageref{fig:overview:apfa}) compares the average $n$-th
root Hermite factor achieved by the different reduction algorithms in all dimensions.  Also
indicated are the confidence intervals for the average value with a confidence level of 99.9\%. The
average values for dimensions 100, 200, 300 and 400 are additionally summarized in
Table~\ref{table:red}, where also values of the worst-case bound from
Equation~\eqref{equ:hermiteLLL} are given.  Note that our data for LLL is similar to the one in
\cite{ng06} and \cite[Table~1]{gam08}. However, we do not see convergence of the $n$-th root Hermite
factors in our experiments, as they are still increasing in high dimensions $n>200$, respectively
even slightly decreasing in the case of LLL.  Our \plll algorithm clearly outperforms LLL and BKZ-5,
however not \pllltwo, \dlll with $\beta=5,10$ and BKZ-10.  \dlll with $\beta=10$ seems the strongest
of the considered lattice reduction algorithms.  It is very interesting to see that \pllltwo
performs remarkably better than the original \plll when it comes to the Hermite factor achieved.

Figure~\ref{fig:overview:time} (see page~\pageref{fig:overview:time}) compares the average
logarithmic running time of the algorithms for all dimensions. Recall that we used different
arithmetic for dimensions below and above 160, whence two separate graphs are given.  We see that
the observed order is similar to the order induced by the Hermite factors. The only somewhat
surprising fact is that \pllltwo is even slower than BKZ-10, i.e.\ it is only faster than \dlll with
$\beta=10$.

\begin{table}[h]
  \renewcommand{\arraystretch}{1.3}
  \begin{center}
    \begin{tabular}{  l l l l l l l }
      \toprule
      Dimension & & $n = 100$ & $n = 200$ & $n = 300$ & $n = 400$ \\
      \midrule
      Worst-case bound (proven) & ~\ & $\approx 1.0774$ & $\approx 1.0778$ & $\approx 1.0779$ & $\approx 1.0780$ \\
      Empirical $0.99$-LLL & & $ 1.0186$ & $ 1.0204$ & $ 1.0212$ & $ 1.0212$ \\ 
      Empirical $0.99$-BKZ-5 & ~\ & $ 1.0152$ & $ 1.0160$ & $ 1.0162$ & $ 1.0164$ \\ 
      Empirical $0.99$-\plll & & $ 1.0146$ & $ 1.0151$ & $ 1.0153$ & $ 1.0155$ \\ 
      Empirical $0.99$-\pllltwo & & $ 1.0142$ & $ 1.0147$ & $ 1.0149$ & \;\quad--- \\ 
      Empirical $0.99$-\dlll with $\beta=5$ & ~\ & $ 1.0137$ & $ 1.0146$ & $ 1.0150$ & $ 1.0152$ \\ 
      Empirical $0.99$-BKZ-10 & ~\ & $ 1.0139$ & $ 1.0144$ & $ 1.0145$ & \;\quad--- \\ 
      Empirical $0.99$-\dlll with $\beta=10$ & ~\ & $ 1.0129$ & $ 1.0134$ & $ 1.0138$ & \;\quad--- \\ 
      \toprule
    \end{tabular}
  \end{center}
  \caption{Worst case bound and average case estimate for $\delta$-LLL reduction, $\delta$-\dlll
  reduction, $\delta$-\plll reduction and $\delta$-BKZ reduction of the $n$-th root Hermite factor
  $\|b_1\|^{1/n} \cdot \vol(\tL)^{-1/n^2}$. The entries are sorted in descending order with respect
  to the observed Hermite factors.}
  \label{table:red}
\end{table}

Figures~\ref{fig:comparism:up:ld} and \ref{fig:comparism:up:real} (see
pages~\pageref{fig:comparism:up:ld} and \pageref{fig:comparism:up:real}) allow to compare the
different reduction algorithms with respect to the running time and the achieved Hermite factor at
the same time. Every line connecting bullets corresponds to the behavior of one algorithm in
different dimensions. Again, the gray box surrounding a bullet is a two-dimensional confidence
interval with confidence level 99.9\%. The shaded regions show which Hermite factors can be achieved
in every dimension by these algorithms. Algorithms on the border of the region are optimal for their
Hermite factor: none of the other algorithms in this list produces a better average Hermite factor
in less time.

The only algorithm which is never optimal is \pllltwo, which is slower than \dlll
with $\beta = 5$ or
BKZ-10 and provides worse average Hermite factors up to dimension 160. \pllltwo produces slightly
better average Hermite factors than \dlll with $\beta = 5$ in high dimensions, for example from 280
on, but is there beaten by BKZ-10 which is in these dimensions far more efficient and provides
better Hermite factors.

Another interesting observation is that in dimensions 40 to 80, \plll is both faster than BKZ-5 and
yields shorter vectors. While the running time difference in dimension~80 is quite marginal, it is
substantial in dimension~40. This shows that \plll could be used for efficient preprocessing of
blocks for enumeration in BKZ-style algorithms with large block sizes, such as Chen's and Nguyen's
BKZ~2.0 \cite{bkz-2.0}.

\subsection{Comparison to fplll}
To show the independence of the \plll concept from the concrete implementation, we added a \plll
implementation to version~4.0.1 of the fplll
library;\footnote{\url{http://perso.ens-lyon.fr/damien.stehle/fplll/} and
\url{http://xpujol.net/fplll/}} a patch can be downloaded at
\url{http://user.math.uzh.ch/fontein/fplll-potlll/}. We ran the experiments with fplll's LLL
implementation and our \plll addition in dimensions 40 to 320. For lower dimensions (up to 160 at
least), the fplll-reduced lattices (both LLL and \plll) were identical to the ones of our
implementation. For higher dimensions, the output quality in terms of the $n$-th root Hermite factor
was essentially the same as for our implementation. While fplll was somewhat faster than our
implementation, the relative difference between LLL and \plll was the same as for our
implementation.

\begin{sidewaysfigure}[p]
  \includegraphics[width=\textwidth]{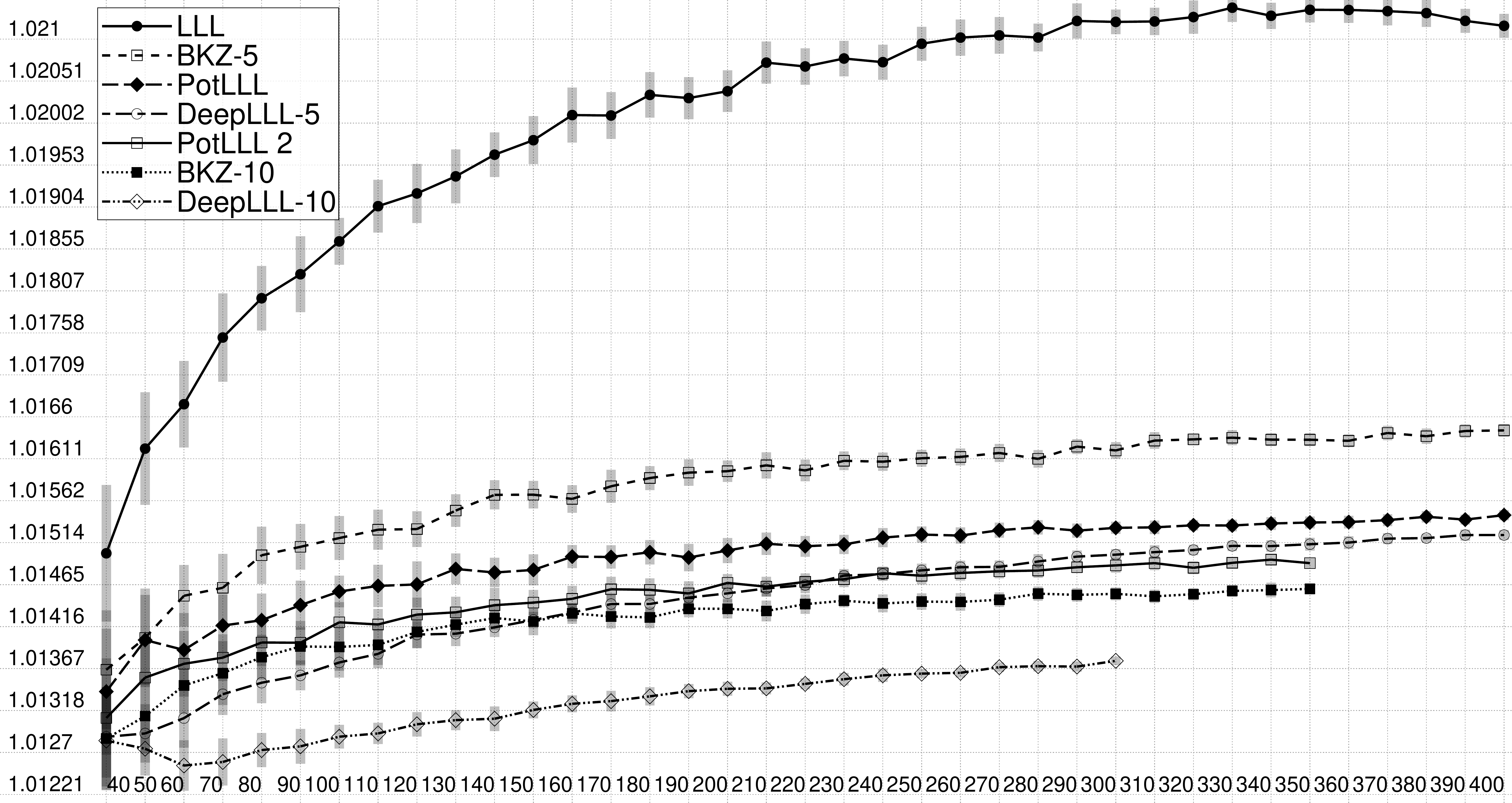}
  \caption{Average $n$-th root Hermite factor ($y$~axis) for dimension~$n$ ($x$~axis) from 40 to
  400.}
  \label{fig:overview:apfa}
\end{sidewaysfigure}

\begin{sidewaysfigure}[p]
  \includegraphics[width=0.3\textwidth]{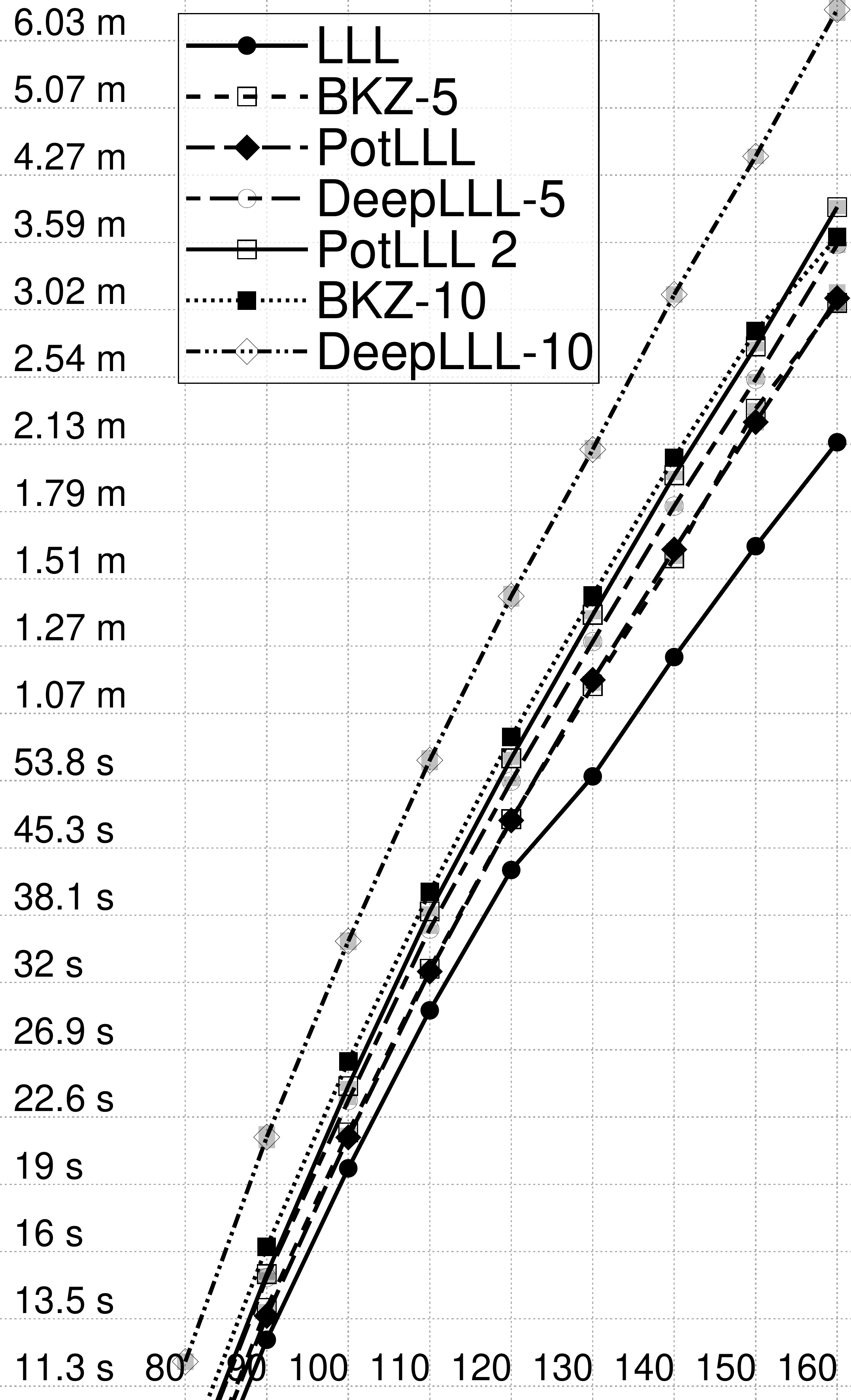}%
  \includegraphics[width=0.7\textwidth]{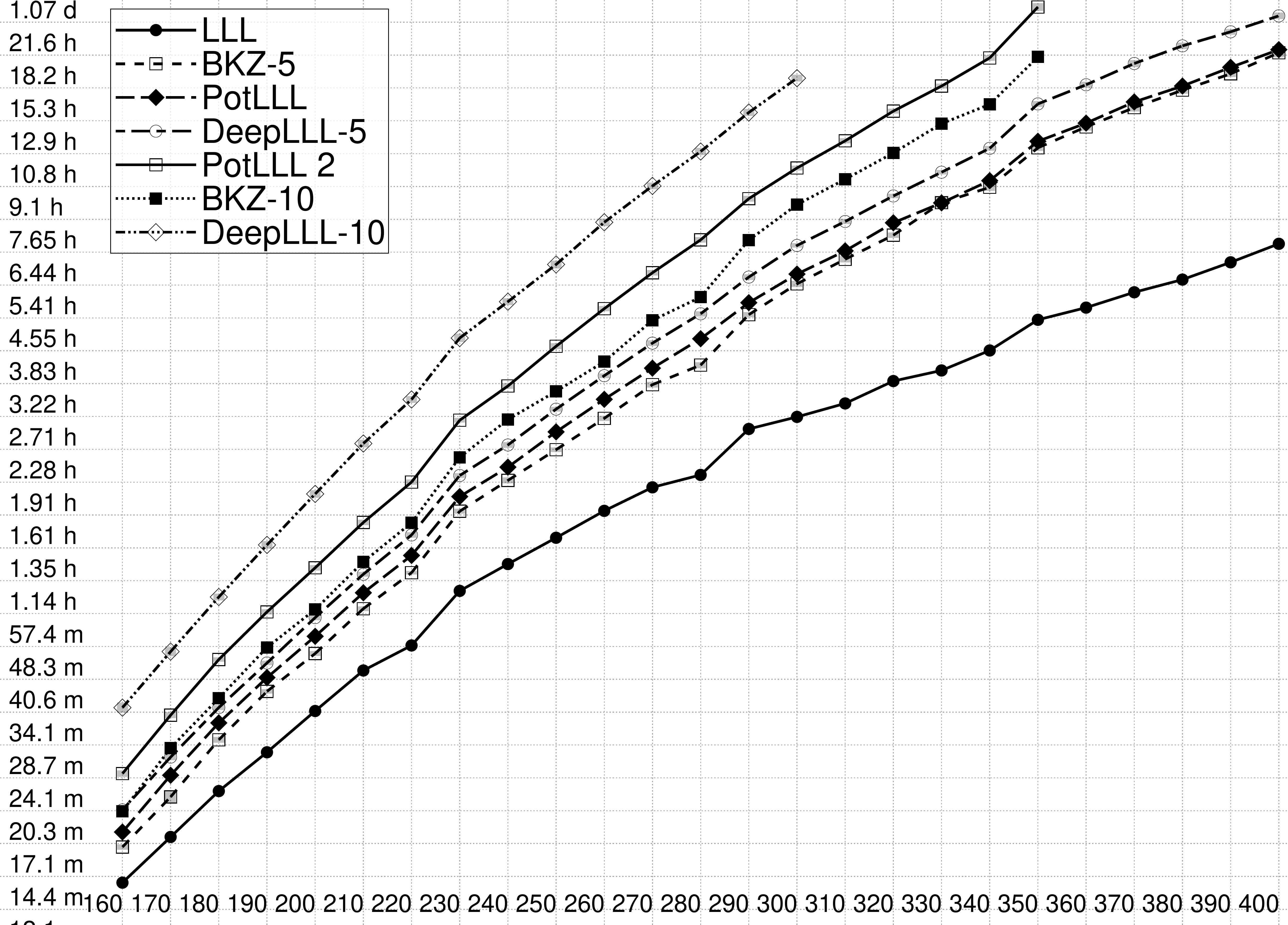}
  \caption{Average logarithmic CPU time ($y$~axis) for dimension~$n$ ($x$~axis) from 40 to 400. The
  left graph uses \texttt{long double} arithmetic, the right graph MPFR arithmetic.}
  \label{fig:overview:time}
\end{sidewaysfigure}

\begin{sidewaysfigure}[p]
  \includegraphics[width=\textwidth]{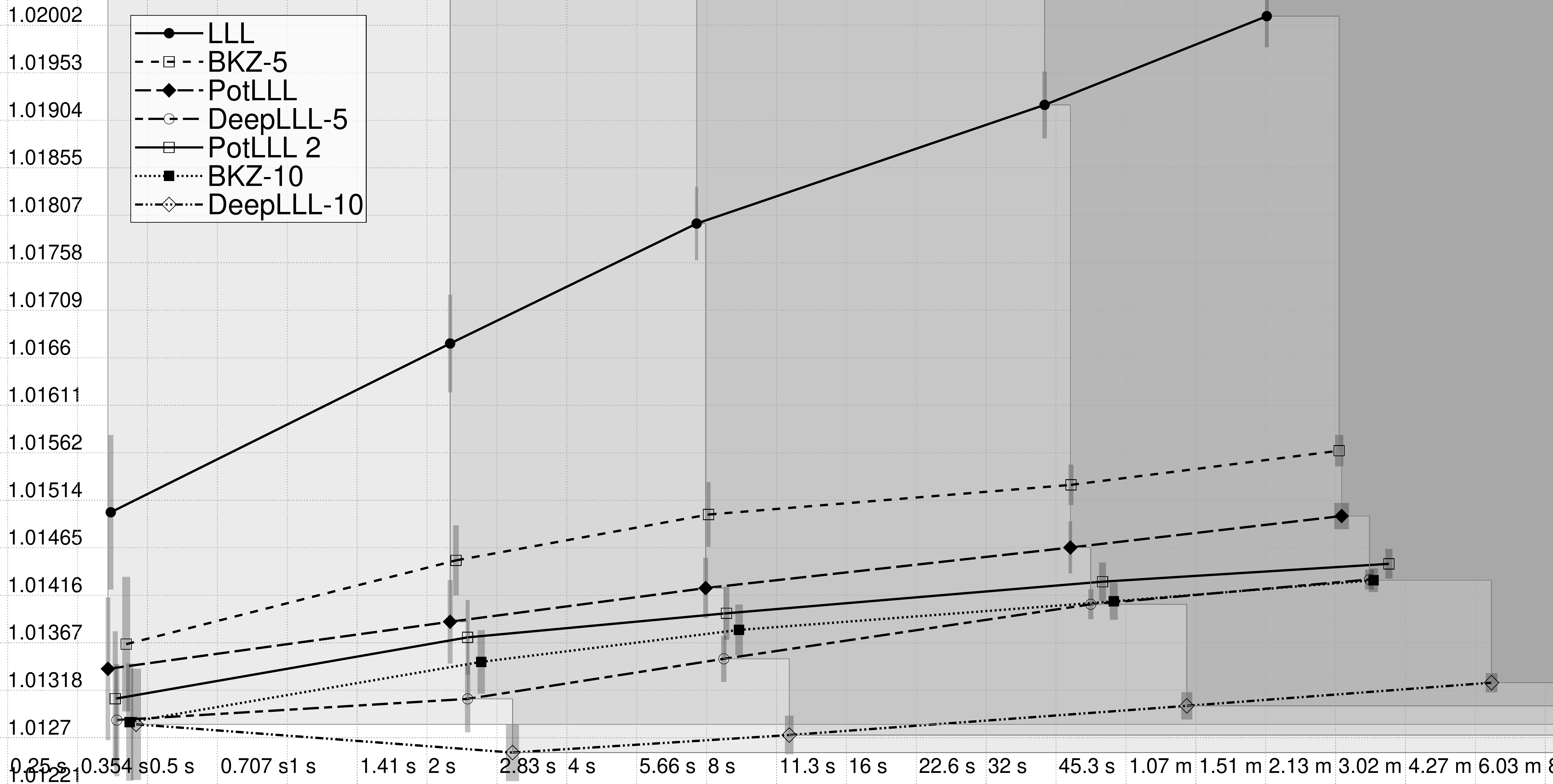}
  \caption{\texttt{long double} arithmetic. The highlighted areas represent dimensions 40, 60, 80,
  120 and 160.}
  \label{fig:comparism:up:ld}
\end{sidewaysfigure}

\begin{sidewaysfigure}[p]
  \includegraphics[width=\textwidth]{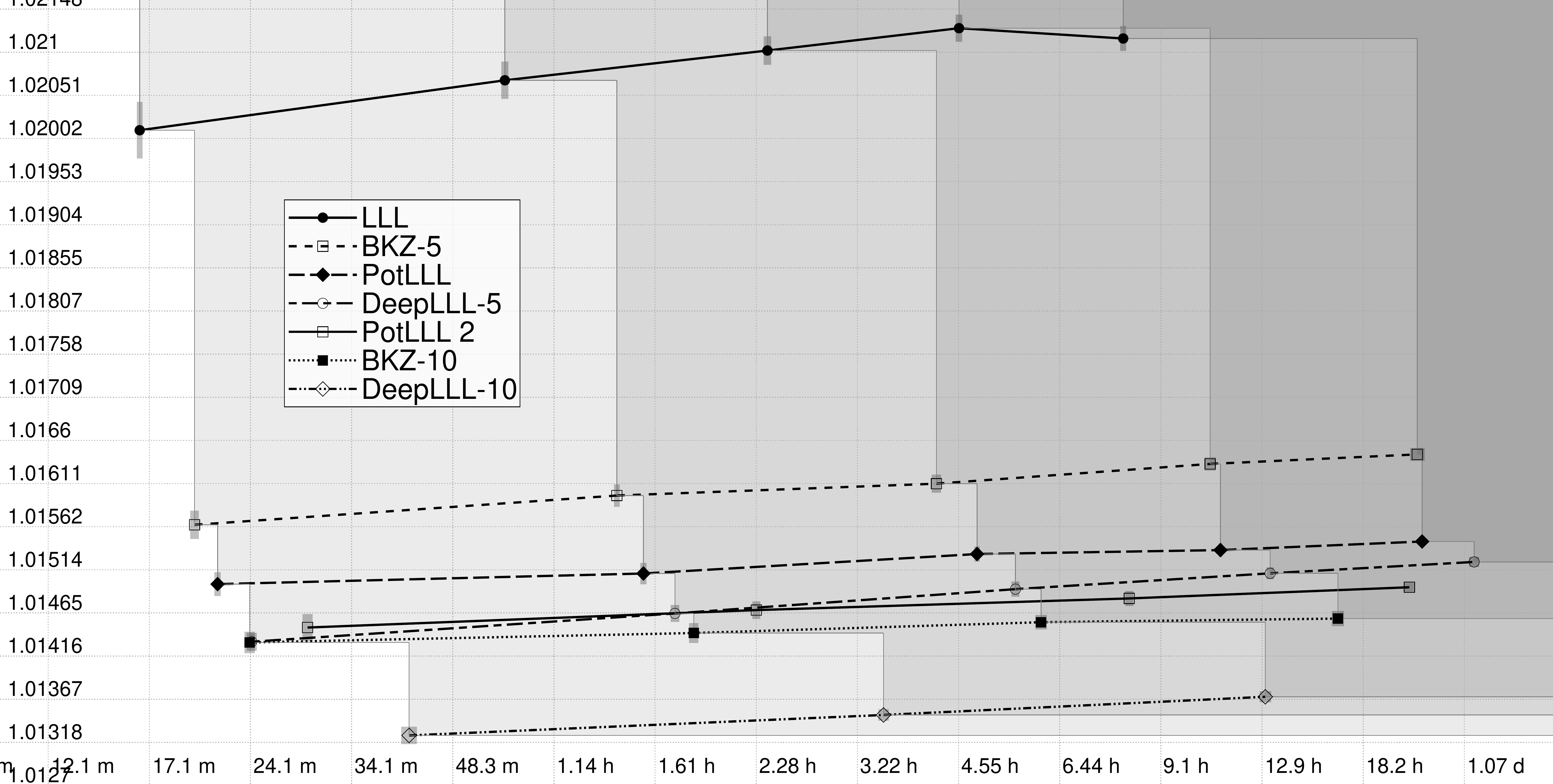}
  \caption{MPFR arithmetic. The highlighted areas represent dimensions 160, 220, 280, 340 and 400.}
  \label{fig:comparism:up:real}
\end{sidewaysfigure}

\begin{figure}[p]
  \begin{center}
    \begin{subfigure}[t]{\textwidth}
      \begin{center}
        \includegraphics[width=\textwidth]{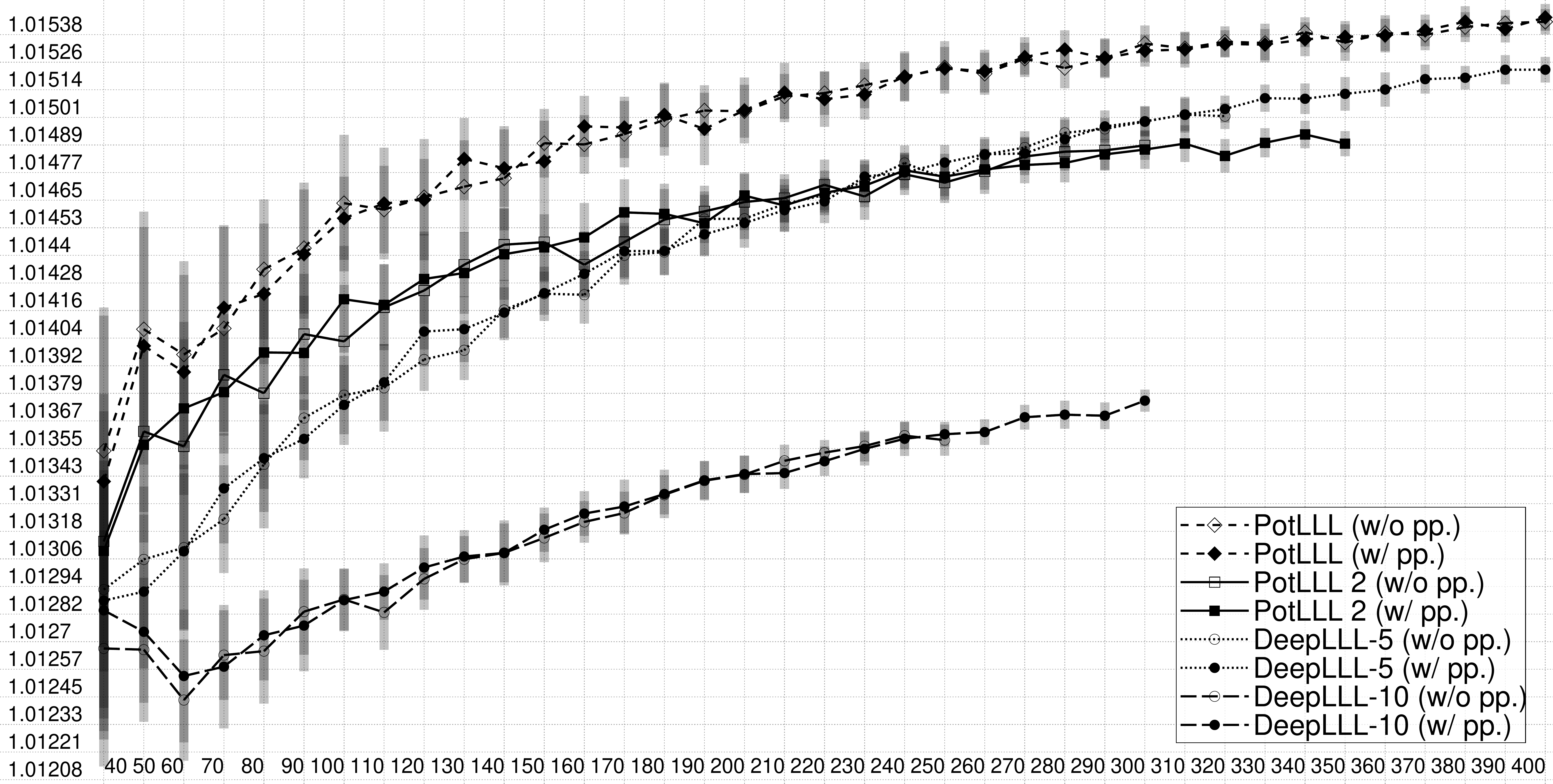}
      \end{center}
      \caption{Comparison of the approximation factors with and without preprocessing.\vspace{0.3cm}}
      \label{fig:approx-pp-vs-up}
    \end{subfigure}
    \begin{subfigure}[t]{\textwidth}
      \begin{center}
        \includegraphics[width=0.5\textwidth]{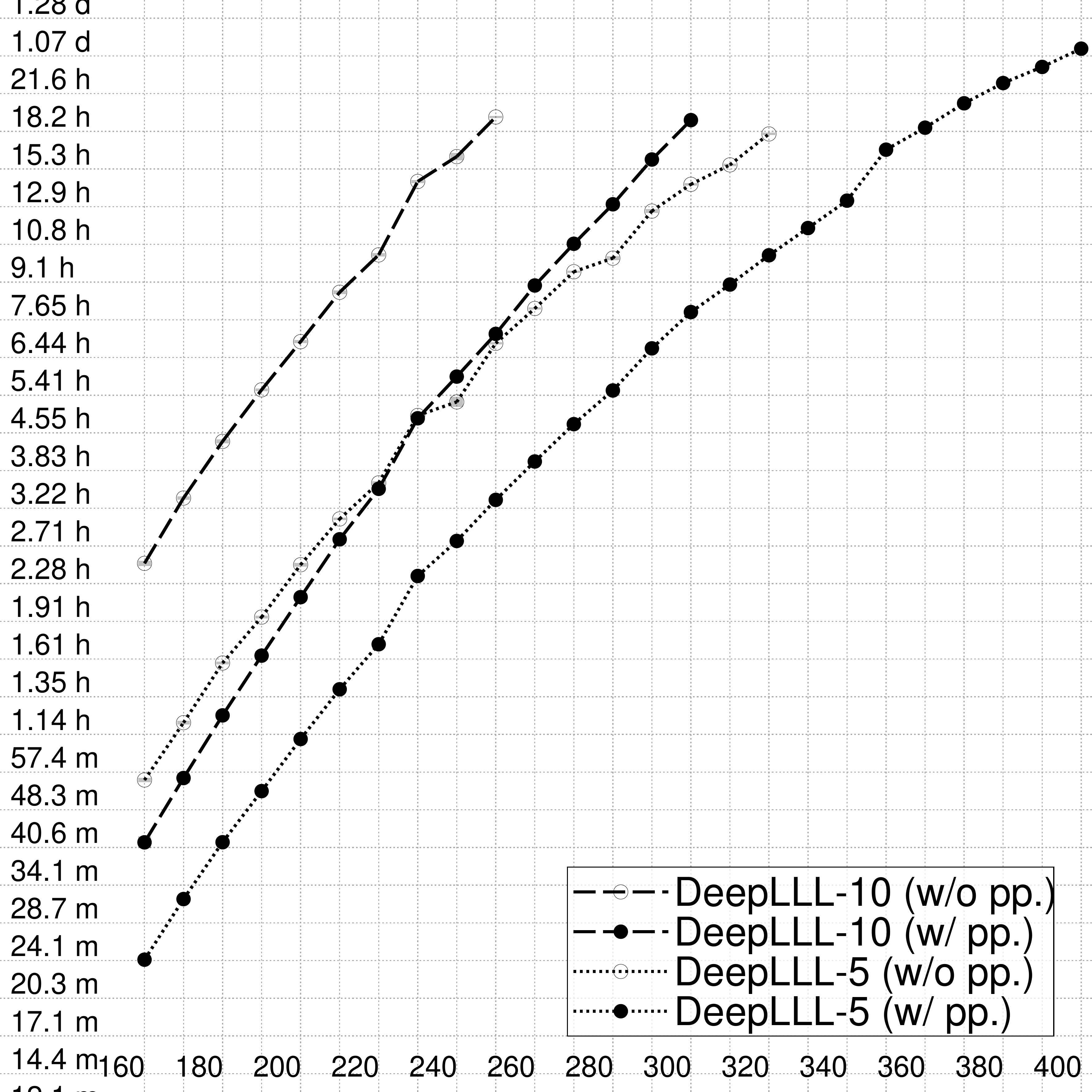}%
        \includegraphics[width=0.5\textwidth]{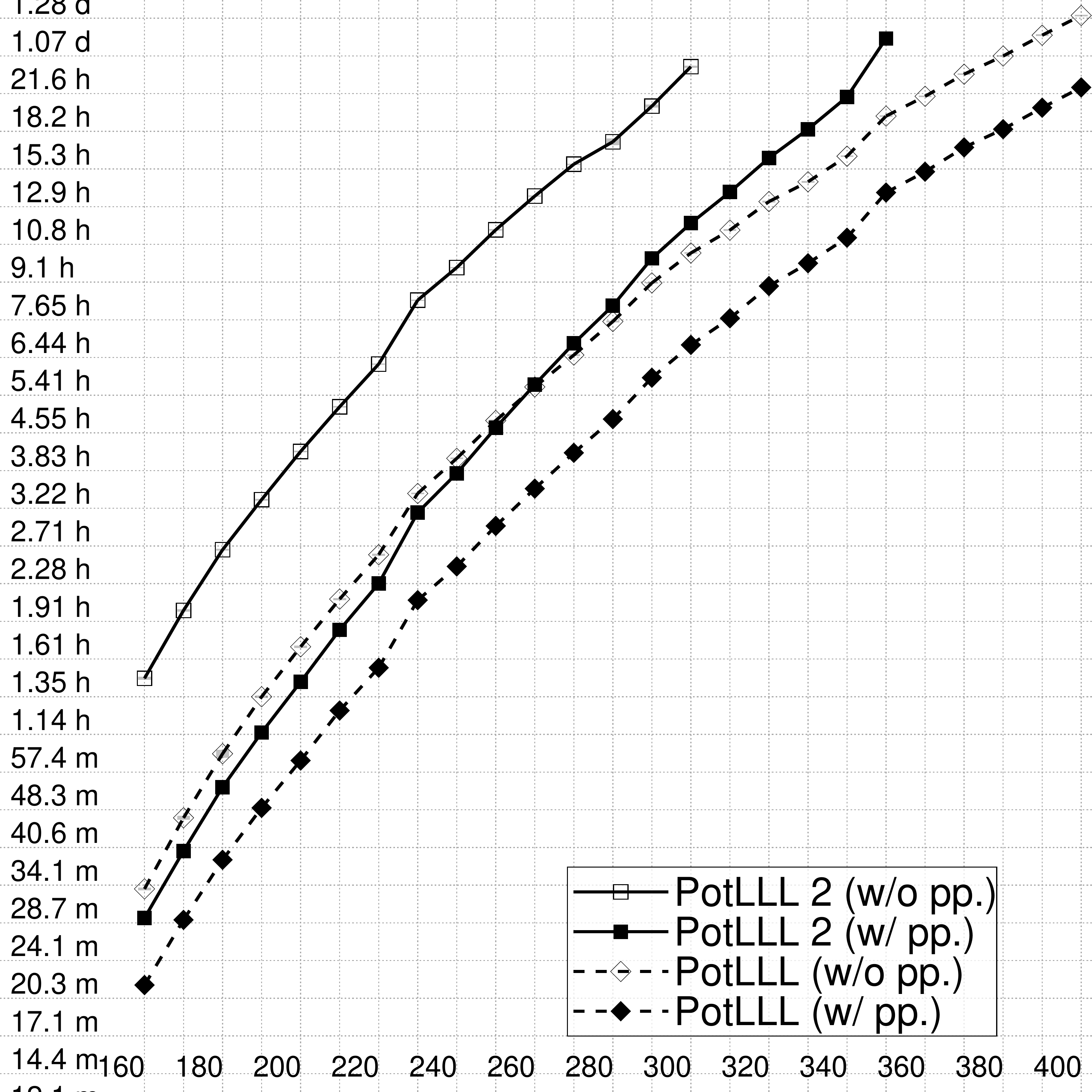}
      \end{center}
      \caption{Comparison of the running times with and without preprocessing (MPFR arithmetic).}
      \label{fig:time-pp-vs-up}
    \end{subfigure}
    \caption{Comparing \plll, \pllltwo and \dlll variants with and without LLL preprocessing.}
    \label{fig:pp-vs-up}
  \end{center}
\end{figure}

\section{Conclusion and future work}\label{sec:conclusion}

We define the notion of a \plll reduced basis and give two algorithms to compute such bases.  Both
algorithms are polynomial time improvements of LLL and are based on the concept of deep insertions
as in Schnorr and Euchner's \dlll.  While the provable bounds of the achieved Hermite factor are not
better than for classical LLL -- in fact, for reduction parameter~$\delta = 1$, the existence of
critical bases shows that better lattice-independent bounds do not exist -- the
practical behavior
is much better than for classical LLL and they outperform BKZ-5.

It is striking to see that although our two algorithms to compute a \plll reduced basis only differ
in the strategy of choosing the insertion depth, their practical behavior is different. We
therefore believe that is might be worth to consider yet other strategies of choosing the
insertions.  Further an insertion can be seen as a special kind of permutation of the basis
vectors. Ensuring that an insertion only happens when it results in a proper decrease of the
potential of the basis ensures the polynomial running time of the algorithms.  This concept could be
generalized to other classes of permutations. The crucial point is the easy computation of the
change of the potential under the different permutations.

It is likely that the improvements of the $L^2$ algorithm~\cite{ng06} and the $\tilde{L^1}$
algorithm~\cite{NovocinSV11} can be used to improve the runtime of our \plll algorithm, in order to
achieve faster runtime. We leave this for future work.

\paragraph*{Acknowledgements}
This work was supported by CASED (\url{http://www.cased.de}). Michael Schneider is
supported by project BU~630/23-1 of the German Research Foundation (DFG). Urs Wagner and Felix
Fontein are supported by SNF grant no.~132256.

\end{document}